\documentclass[12pt,english]{article}
\usepackage[utf8]{inputenc}
\usepackage[english]{babel}
\usepackage{float}
\usepackage{booktabs,longtable}
\usepackage{amsmath}
\usepackage{amssymb,amsthm,amsfonts}
\usepackage{enumerate}
\usepackage{tikz}
\usetikzlibrary{arrows, arrows.meta, positioning, calc}
\usepackage[authoryear,round]{natbib}
\usepackage[hyphens]{url}
\usepackage{hyperref}
\usepackage[a4paper,body={16cm,23.7cm}]{geometry}

\setlength{\parskip}{0.3cm}
\setlength{\parindent}{0pt}
\sloppy

\hypersetup{colorlinks=true, linkcolor=cyan, citecolor=cyan, urlcolor=black, breaklinks=true}
\newtheorem{theorem}{Theorem}
\newtheorem{lemma}{Lemma}

\newtheorem{proposition}{Proposition}

\newtheorem{example}{Example}

\begin{document}

\title{\textbf{The geometric adjudication of water rights in international rivers}
\thanks{We thank William Thomson for helpful comments and suggestions. We gratefully acknowledge grants PID2023-147391NB-I00 and PID2023-146364NB-I00, respectively, funded by MCIU/AEI/10.13039/501100011033 and FSE+.}
}
\author{\textbf{Ricardo Mart\'{\i}nez}\thanks{Universidad de Granada.}$\qquad$ \textbf{Juan D. Moreno-Ternero}\thanks{Universidad Pablo de Olavide.}} 

\maketitle

\begin{abstract}
We study the adjudication of water rights in international rivers. We characterize allocation rules that formalize focal principles to deal with water disputes in a basic model. Central to our analysis is a family of geometric rules that implement concatenated transfers downstream. They can be seen as formalizing Limited Territorial Sovereignty, as suggested in the Rio Declaration on Environment and Development. We apply our rules to the case of the Nile River, with a long history of disputes between downstream and upstream nations.   
\end{abstract}

\noindent \textbf{\textit{JEL numbers}}\textit{: D23, D63, Q25.}\medskip{}

\noindent \textbf{\textit{Keywords}}\textit{: water, rights, sharing, international rivers, fairness.}\medskip{}  \medskip{}


\newpage


\section{Introduction}
Many natural resources are over-exploited nowadays. 
A point in case is groundwater, whose depletion is a pressing problem worldwide \citep[e.g.,][]{Jasechko2024}. In contrast to rarely implemented protocols such as rationing \citep[e.g.,][]{Ryan2022}, individual property rights to groundwater are more widespread \citep[e.g.,][]{Edwards2024}. 
Water rights have long been a concern for economists \citep[e.g.,][]{ Libecap2011}.\footnote{They were already mentioned in the lead article of the very first issue of the American Economic Review \citep[e.g.,][]{Coman2011}.} 
Under the so-called riparian doctrine, 
rights to water are only usufructuary, although certain diversions of water by riparian rights holders are permissible \citep[e.g.,][]{burness1979appropriative}. 
Under the appropriative doctrine (widely adopted in the US West), the right to a certain amount of water is established and maintained only through use \citep[e.g.,][]{burness1979appropriative}. 
In general, water rights are often ambiguous and costly to enforce, which create frequent disputes and distortions \citep[e.g.,][]{Donna2024}. 
Historically, judges resolved disputes over water rights on an ad hoc basis. 
More recently, aiming to minimize disputes, a process of general basin adjudication has been gradually endorsed. With this process, a court formally verifies which users within a watershed have valid rights and what their entitlements are \citep[e.g.,][]{Browne2023}. 

The aim of this paper is to provide a basic model for the adjudication of riparian rights. In our model, agents (to be naturally interpreted as countries, although they could also be considered as regions, or towns) are located along a river with a linear structure. For each agent, there is a river inflow from tributaries. And we want to design rules that associate with each possible profile of inflows an allocation indicating the amount of water each agent gets. We require from the outset that allocations are \emph{feasible} and \emph{non-wasteful}. That is, at each river location, the overall amount allocated to upstream agents does not exceed the overall inflow upstream; and the overall amount provided in the allocation coincides with the overall inflow.

A natural first step to design rules is to formalize existing focal principles to prevent or resolve water disputes within an international river basin \citep[e.g.,][]{Kilgour1995}. 
One is the so-called principle of Absolute Territorial Sovereignty (ATS), also known as the Harmon doctrine. This principle asserts that countries can use any water that flows into the river on their territory, without taking into account the downstream consequences. In our setting, this will be formalized as the so-called \textit{no-transfer} rule, which simply suggests the initial profile of inflows as the solution of the distribution problem. 
A somewhat polar principle is the so-called principle of Unlimited Territorial Integrity (UTI). This principle stipulates that a country
may not alter the natural flow of waters passing through its territory in any
manner which will affect the water in another state. 
In our setting, this will be formalized as the so-called \textit{full-transfer} rule, which transfers each inflow to the most downstream agent. 

A compromise between the above two extreme principles, dubbed Limited Territorial Sovereignty, was suggested in the Rio Declaration on Environment and Development, 1992, whose Principle 2 is written as follows:\footnote{Rio Declaration on Environment and Development, United Nations Conference on Environment and Development, 13 June 1992, U.N. Doc. A/CONF.151/26 (Vol. I), reprinted in 31 I.L.M. 874 (1992).} 

\textit{``States have, in accordance with the Charter of the United Nations and the
principles of international law, the sovereign right to exploit their own
resources pursuant to their own environmental and developmental policies, and
the responsibility to ensure that activities within their jurisdiction or
control do not cause damage to the environment of other States or of areas
beyond the limits of national jurisdiction."}


We formalize this principle in our setting by means of a family of rules emerging from compromising between the no-transfer rule and the full-transfer rule, via \textit{geometric} downstream transfers.
Precisely, let $\gamma\in [0,1]$. Assume agent 1 retains the fixed portion $\gamma e_1$ of its inflow  and transfers the remainder to agent 2 (the next agent downstream). Then, the disposable inflow for agent 2 consists of its inflow, $e_2$, and the inflow received from agent 1, $(1-\gamma)e_1$. Assume now that agent 2 also retains a $\gamma$ share of its disposable inflow, i.e., $\gamma (e_2 + (1-\gamma)e_1)$, transferring the rest to the next agent downstream. The process continues sequentially until the most downstream agent is reached. This gives rise to a family of (single-parameter) \textit{geometric} rules. When $\gamma=0$, we obtain the full-transfer rule formalizing the UTI principle described above. When $\gamma=1$, we obtain the no-transfer rule formalizing the ATS principle described above. As $\gamma$ varies in $(0,1)$ we have a variety of options formalizing compromises between both principles, to be interpreted as various formalizations of Limited Territorial Sovereignty.

We characterize the family of (single-parameter) geometric rules defined above upon combining four axioms: \textit{scale invariance}, \textit{upstream invariance}, \textit{partial-implementation invariance} and \textit{equal sources}. The first axiom (scale invariance) is standard in the axiomatic literature and it simply states that if all inflows are multiplied by the same factor, then so is the allocation. The second axiom (upstream invariance) states that a shock in a given river location (in the form of changing the inflow therein) does not have an effect upstream (as water flows downstream). The third axiom (partial-implementation invariance) formalizes an invariance property (for downstream agents), when an agent's inflow is increased with the remaining flow from upstream that was not allocated to upstream agents (whose inflows are now considered null). 
Finally, the fourth axiom (equal sources) states that if two rivers have sources with the same (strictly positive) inflow, then these sources receive the same amount.\footnote{By source of a river we understand its most upstream location with a strictly positive inflow.} 

The normative appeal of the latter axiom described above might be questionable, as it ignores the number of downstream agents and their inflows. It turns out that if we drop it from the list of axioms we characterize a larger family of rules, dubbed \textit{multi-parameter geometric rules}, which extend single-parameter geometric rules naturally to allow for individual-specific portions at each stage. That is, agent 1 retains a portion of its inflow $\gamma_1 e_1$ and transfers the remainder to agent 2. Then, the disposable inflow for agent 2 consists of its inflow, $e_2$, and the inflow received from agent 1, $(1-\gamma_1)e_1$. Then, agent 2 retains a portion $\gamma_2$ of its disposable inflow, i.e., $\gamma_2 (e_2 + (1-\gamma_1)e_1)$, transferring the rest to the next agent downstream. The process continues sequentially until the most downstream agent is reached. An interesting feature of this family is that the agent-specific portions may capture some of the characteristics (e.g., population, historical needs/use, etc.) that are not explicitly included in our stylized model, and might nevertheless play a role in the allocation process. 


An intriguing multi-parameter geometric rule, outside the family of single-parameter geometric rules, is the so-called \textit{serial rule}, introduced by \cite{Mart2025}, and reminiscent of the namesake cost-sharing rule introduced by \cite{Moulin1992}. 
It splits each agent's inflow equally among all agents located downstream, including the agent itself. That is, the rule balances the legitimate claim each agent has to consume their own inflow with the rights of downstream agents to enjoy part of the resources produced upstream. We show that this rule is the only multi-parameter geometric rule that satisfies the axiom of \textit{neutrality}, which states that in a canonical situation (in which only the source of the river yields positive inflow), the source is awarded an amount equal to the average amount downstream agents get.\footnote{Note that this is weaker than requiring equal division downstream.} In other words, the serial rule is characterized by the combination of \textit{scale invariance}, \textit{upstream invariance}, \textit{partial-implementation invariance} and \textit{neutrality}.

To conclude our analysis, we apply our results to the Nile River, probably one of the most famous rivers worldwide, for which disputes have long existed. 
Using data from AQUASTAT, we compute the inflows of Tanzania, Uganda, Ethiopia, South Sudan, Sudan, and Egypt, as well as the amount of water they are actually extracting from the river. We acknowledge that the issue of allocating riparian water rights in the Nile River is more complex than what our model captures, given the historical, political, cultural, economic, as well as social ramifications. That is why we resort to our family of multi-parameter geometric rules, which allow to account some of these aspects. More precisely, we look for country-specific portions of inflows, i.e., the vector $\gamma=(\gamma^1,\gamma^2\dots\gamma^n)$, that rationalize the current allocation of water emanating from the Nile River Agreements. We also provide alternative allocations that could be considered, based on the remaining rules we present in our analysis. 

The rest of the paper is organized as follows. In Section 2, we introduce the model, as well as the main axioms and rules of our analysis. In Section 3, we present our main characterizations, as well as extra characterizations exploring further axioms. In Section 4, we present the illustration to the case of the Nile River. We conclude in Section 5. For a smooth passage, all proofs have been relegated to an Appendix.

\section{The model}\label{model}

A set of \textbf{agents} $N=\{1, \ldots, n\}$ (with $n\ge 3$) are located along a river, which has a linear structure. Lower numbers represent more upstream locations, so that agent 1 is the most upstream agent, agent $n$ is the most downstream agent, and $i\le j$ means that agent $i$ is upstream of agent $j$. 
For each $i \in N$, there is an inflow of the river, denoted by $e_i \geq 0$. Let $e=\left(e_1, \ldots, e_n\right) \in \mathbb{R}^n_+$ be the \textbf{profile of inflows}. Let $\mathcal{D}$ denote the domain of all profiles of inflows. 

Our aim is to provide rules that associate with each profile of inflows an \textbf{allocation} of riparian water rights. That is, another profile indicating the amount of water over which each agent gets a right. As water flows downstream, we require that allocations are \emph{feasible} and \emph{non-wasteful}, i.e.,  $x=\left(x_1, \ldots, x_n\right) \in \mathbb{R}^n_+$ is such that $\sum_{i=1}^k x_i\le\sum_{i=1}^k e_i$, for each $k=1,\dots, n-1$, and $\sum_{i=1}^n x_i=\sum_{i=1}^n e_i$. A \textbf{rule} $R:\mathcal{D}\to \mathbb{R}^n_+$ is a mapping that associates to each profile of inflows $e \in \mathcal{D}$ an allocation $R(e)\in \mathbb{R}^n_+$. 

\subsection{Rules}
The following pair of rules formalize polar ideas regarding the allocation of water rights. The first rule says that each agent retains its inflow. The second rule states that each agent, except for the last one, transfers its whole inflow. Formally, 

\textbf{No-transfer rule}. For each $e \in \mathcal{D}$, and each $i \in N$,
$$
R^{NT}_i(e) = e_i.
$$

\textbf{Full-transfer rule}. For each $e \in \mathcal{D}$, and each $i \in N$,
$$
R^{FT}_i(e) = 
\begin{cases}
    0 & \text{if } i \neq n \\[0.1cm]
    \sum_{k=1}^n e_k & \text{if } i=n 
\end{cases}
$$

The next family of rules represents a compromise between the two extreme options presented above. Let $\gamma\in [0,1]$. Assume agent 1 retains the fixed portion of its inflow $\gamma e_1$ and transfers the remainder to agent 2 (the next agent downstream). Then, the disposable inflow for agent 2 consists of its inflow, $e_2$, and the inflow received from agent 1, $(1-\gamma)e_1$. Assume now that agent 2 also retains a $\gamma$ share of its disposable inflow, i.e., $\gamma (e_2 + (1-\gamma)e_1)$, transferring the rest to the next agent downstream. The process continues sequentially until the most downstream agent is reached. Formally, 

\textbf{Geometric rule of parameter} $\mathbf{\gamma\in[0,1],\, R^\gamma}$: For each $e \in \mathcal{D}$ and each $i \in N$,
$$
R^\gamma_i(e) = 
\begin{cases}
    \gamma e_1 & \text{if } i=1 \\[0.1cm]
    \gamma \left( e_i + \sum_{k=1}^{i-1} \left( e_k-R^{\gamma}_k(e) \right) \right)  & \text{if } i \in \{2,\ldots,n-1\} \\[0.1cm]
    e_n + \sum_{k=1}^{n-1} \left( e_k-R^{\gamma}_k(e) \right)  & \text{if } i=n. 
\end{cases}
$$
Alternatively, 
$$
R^\gamma_i(e) = \gamma \left( e_i + \sum_{k=1}^{i-1} (1-\gamma)^{i-k}e_k \right),
$$
for each $i \in \{1,\dots,n-1\}$, and
$$
R^\gamma_n(e) = e_n + \sum_{k=1}^{n-1} (1-\gamma)^{n-k}e_k.
$$

Notice that, when $\gamma=0$ we obtain the full-transfer rule, whereas when $\gamma=1$ we obtain the no-transfer rule, i.e., $R^0\equiv R^{FT}$, and $R^1\equiv R^{NT}$.

\begin{example} Consider two vectors of inflows: $e^1=(0,36,0,0)$ and $e^2=(12,4,0,10)$. The next table shows how two particular single-parameter geometric rules apply to these problems.
\begin{table}[H]
    \centering
    \begin{tabular}{lcccc}
    \toprule
    & \multicolumn{2}{c}{$\gamma=\frac{1}{2}$} &  \multicolumn{2}{c}{$\gamma=\frac{2}{3}$} \\
    \cmidrule(lr){2-3} \cmidrule(lr){4-5}
    Agents & $R^\gamma(e^1)$ & $R^\gamma(e^2)$ & $R^\gamma(e^1)$ & $R^\gamma(e^2)$ \\ 
    \midrule 
    1 & 0  & 6  & 0  & 8 \\
    2 & 18 & 5  & 24 & $\frac{16}{3}$ \\
    3 & 9  & $\frac{5}{2}$  & 8 & $\frac{16}{9}$ \\
    4 & 9  & $\frac{25}{2}$ & 2 & $\frac{98}{9}$ \\
    \bottomrule
    \end{tabular} 
\end{table}
\end{example}

The family of single-parameter geometric rules can naturally be extended to allow for individual-specific portions at each stage. That is, 

\textbf{Geometric rules of parameter} $\mathbf{\alpha=(\alpha_1,\ldots,\alpha_{n-1},\alpha_n) \in [0,1]^{n-1} \times \{1\},\, R^\alpha}$: for each $e \in \mathcal{D}$ and each $i \in N$,
$$
R^\alpha_i(e) = \alpha_i \left[ e_i + \sum_{k=1}^{i-1} \left( e_k-R^{\alpha}_k(e) \right) \right].
$$

Alternatively, 
$$
R^\alpha_i(e) = \alpha_i \left[ e_i + \sum_{k=1}^{i-1} \prod_{j=k}^{i-1} (1-\alpha_j)e_k \right].
$$



An interesting multi-parameter geometric rule, outside the family of single-parameter geometric rules, is the 
one that splits each agent's inflow equally among all agents located downstream, including the agent itself.

\textbf{Serial rule}, $\mathbf{R^{S}}$. For each $e \in \mathcal{D}$ and each $i\in N$,
$$ 
R^{S}_i(e) = \frac{e_1}{n}+\frac{e_2}{n-1}+\frac{e_3}{n-2}+\ldots+\frac{e_i}{n-i+1} = \sum_{j \leq i} \frac{e_j}{n-j+1}. 
$$
\subsection{Axioms}
Instead of endorsing directly any particular rule among those introduced above, we resort to the axiomatic approach to select among them. 

We first present four axioms already considered by \cite{Mart2025}. 

The first axiom states that if all inflows are multiplied by the same factor, then so is the allocation. 

\textbf{Scale invariance}. For each $e \in \mathcal{D}$ and each $\rho \in \mathbb{R}_+$, $R(\rho e)=\rho R(e)$.

The second axiom says that, if the inflow of one agent changes, the allocation of those located upstream remains unaltered. Formally, 

\textbf{Upstream invariance}. For each pair $e,e' \in \mathcal{D}$, such that 
$e_j = e'_j$ for all $j \in N \backslash \{i\}$, then 
for each $k<i$,
$$
R_k(e) = R_k(e').
$$

For each $e \in \mathcal{D}$, its \textit{source} is $s(e)=\min\{k\in \{1,2,\dots n-1\}: e_k>0\}$. The next axiom states that if the sources of two rivers have the same inflow, then these sources receive the same amount.  

\textbf{Equal sources}. For each pair $e,e' \in \mathcal{D}$, such that $e_{s(e)}=e'_{s(e')}$,
$$
R_{s(e)}(e)  = R_{s(e')}(e').
$$

The fourth axiom focuses on the special cases in which the source is the only agent with strictly positive inflow, requiring the rule be \textit{neutral} between the source and the average agent downstream. More precisely, the axiom states that, for those cases, the source should get exactly the average amount downstream agents get. 

\textbf{Neutrality}. For each $e \in \mathcal{D}$, such that $e_i>0$ for some $i \in \{1,\dots, n-1\}$, and $e_j=0$ for all $j \in N \backslash \{i\}$,
$$
R_i(e)=\frac{1}{n-i}\sum_{k>i} R_k(e).
$$

We also introduce a new axiom for this setting, which is an adaptation to this context of an invariance property recently introduced by \cite{Dietzenbacher24} for claims problems. To motivate it, imagine a rule has been applied to a given problem. Now, for each agent $i \in N$, 
assume all of its upstream agents are awarded as the rule indicates. This generates a residual problem in which the inflows of those upstream agents are set to zero, whereas agent $i$ is endowed with a new inflow resulting from the sum of its old inflow and any remaining flow from upstream that has not been allocated. The inflows for agents located downstream remain as in the original problem. The invariance requirement for the rule is that it assigns the same amount to each of the downstream agents as it did initially. Formally,

\textbf{Partial-implementation invariance}. For each $e \in \mathcal{D}$ and each $i \in N$,
$$
R_{D(i)} = R_{D(i)}\left( 0_{U(i) \backslash \{i\}} , e_i+ \sum_{k=1}^{i-1} (e_k-R_k(e)), e_{D(i) \backslash \{i\}} \right)
$$

As shown in the next section, this new axiom (in this setting) will be instrumental to provide characterization results. 

\section{The results}
\subsection{The main characterizations}
Our first result states that the combination of partial-implementation invariance, upstream invariance, and scale invariance characterizes the multi-parameter geometric rules (the broadest family introduced above).

\begin{theorem} \label{thm_generalized_geom}
A rule satisfies partial-implementation invariance, upstream invariance, and scale invariance if and only if it is a multi-parameter geometric rule.
\end{theorem}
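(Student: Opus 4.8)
The plan is to prove the two directions separately, with the ``if'' direction being routine and the ``only if'' direction carrying the real content. For the ``if'' direction, I would take an arbitrary multi-parameter geometric rule $R^\alpha$ and verify the three axioms directly from the closed-form expression $R^\alpha_i(e) = \alpha_i\bigl(e_i + \sum_{k=1}^{i-1}\prod_{j=k}^{i-1}(1-\alpha_j)e_k\bigr)$. Scale invariance is immediate since the formula is linear in $e$. Upstream invariance holds because $R^\alpha_i(e)$ depends only on $e_1,\dots,e_i$, so changing $e_i$ leaves $R^\alpha_k(e)$ untouched for $k<i$. Partial-implementation invariance requires checking that the ``disposable inflow'' reaching agent $i$ in the reduced problem equals the disposable inflow in the original, which follows from the recursive definition $R^\alpha_i(e)=\alpha_i[e_i + \sum_{k=1}^{i-1}(e_k - R^\alpha_k(e))]$: setting upstream inflows to zero and replacing $e_i$ by $e_i + \sum_{k=1}^{i-1}(e_k - R^\alpha_k(e))$ gives the same bracketed quantity, hence the same awards for all downstream agents by induction.

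For the ``only if'' direction, let $R$ satisfy the three axioms. The strategy is to extract the parameters $\alpha_i$ and then show $R$ must coincide with $R^{\alpha}$ on all of $\mathcal{D}$. First I would use partial-implementation invariance with $i=1$ (where the bracketed upstream sum is empty) to reduce the determination of $R_1$ to canonical one-agent-source problems, and then scale invariance to conclude that $R_1(e)$ is linear in $e_1$, i.e.\ $R_1(e)=\alpha_1 e_1$ for some constant $\alpha_1\in[0,1]$ (the range constraint comes from feasibility and nonnegativity). Next, proceeding inductively on $i$: assuming $R_k$ agrees with the geometric formula for all $k<i$, apply partial-implementation invariance at stage $i$ to pass to the reduced problem where agents $1,\dots,i-1$ have zero inflow and agent $i$ holds $e_i + \sum_{k<i}(e_k - R_k(e))$; then upstream invariance together with scale invariance pins down $R_i$ in that reduced problem as a linear function of the single disposable inflow, giving the constant $\alpha_i$, and partial-implementation invariance transports this back to the original problem.

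The main obstacle I anticipate is establishing that the ``constant of proportionality'' $\alpha_i$ is genuinely independent of the rest of the inflow profile — in particular, independent of the downstream inflows $e_{i+1},\dots,e_n$ and of how the upstream disposable flow is composed. Upstream invariance handles independence from $e_j$ for $j<i$ only indirectly (via the reduction), and one must be careful that partial-implementation invariance is being applied to a profile genuinely in $\mathcal{D}$ (feasibility of the reduced profile needs checking). The cleanest route is probably: fix all coordinates except $e_i$, use upstream invariance to argue $R_i$ cannot depend on $e_{i+1},\dots,e_n$ only after first invoking partial-implementation invariance at $i+1$ to ``freeze'' the downstream structure, then use scale invariance on the one-parameter family obtained by scaling the reduced problem to get linearity and hence a single slope $\alpha_i$. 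One should also verify the boundary case $\alpha_n=1$ separately: non-wastefulness forces the most downstream agent to absorb exactly the total residual, which is precisely the $i=n$ clause of the definition, so no free parameter survives there.

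Finally, I would assemble the induction: having shown $R_i(e)=\alpha_i[e_i+\sum_{k<i}(e_k-R_k(e))]$ for each $i\le n-1$ with $\alpha_i\in[0,1]$ constants, and $R_n(e)=e_n+\sum_{k<n}(e_k-R_k(e))$ by non-wastefulness, this is exactly the recursive definition of $R^\alpha$ with $\alpha=(\alpha_1,\dots,\alpha_{n-1},1)$, completing the characterization.
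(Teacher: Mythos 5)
Your proposal is correct and follows essentially the same route as the paper: the paper isolates the combinatorial core in a lemma showing that upstream invariance plus partial-implementation invariance force the recursive form $R_i(e)=\alpha_i\bigl(e_i+\sum_{k<i}(e_k-R_k(e))\bigr)$ with $\alpha_i(\cdot)$ read off from the canonical profiles $(0,\dots,0,r,0,\dots,0)$, and then uses scale invariance to linearize $\alpha_i(r)=\alpha_i r$, which is exactly your induction with scale invariance folded in. One small slip: partial-implementation invariance at $i=1$ is vacuous (the reduced profile equals the original), so the base case $R_1(e)=\alpha_1 e_1$ comes from iterated upstream invariance (to drop $e_2,\dots,e_n$) plus scale invariance — the same mechanism you correctly invoke at later stages — and likewise no appeal to partial-implementation invariance at $i+1$ is needed to get independence of $R_i$ from downstream inflows.
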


If we add the axiom of equal sources, the family shrinks to the sub-family of single-parameter geometric rules, as stated in the next result.

\begin{theorem}\label{thm_geom}
A rule satisfies partial-implementation invariance, upstream invariance, scale invariance and equal sources if and only if it is a single-parameter geometric rule.
\end{theorem}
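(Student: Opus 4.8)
The plan is to leverage Theorem~\ref{thm_generalized_geom} as a starting point: a rule satisfying \emph{partial-implementation invariance}, \emph{upstream invariance}, and \emph{scale invariance} is already known to be a multi-parameter geometric rule $R^\alpha$ for some $\alpha=(\alpha_1,\ldots,\alpha_{n-1},1)\in[0,1]^{n-1}\times\{1\}$. Thus it suffices to show that, among these rules, \emph{equal sources} holds if and only if $\alpha_1=\alpha_2=\cdots=\alpha_{n-1}$ (i.e., we are in the single-parameter subfamily). The ``if'' direction is straightforward: when all the $\alpha_i$ coincide, $R^\alpha=R^\gamma$ with $\gamma=\alpha_1$, and one checks directly from the closed form $R^\gamma_{s(e)}(e)=\gamma\, e_{s(e)}$ that the amount the source receives depends only on $e_{s(e)}$, so equal sources is satisfied. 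The substance is the ``only if'' direction.

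For the ``only if'' direction, I would argue that equal sources forces the parameters to agree. The key observation is that for a multi-parameter geometric rule and an inflow profile $e$ whose source is agent $i$ (so $e_1=\cdots=e_{i-1}=0$ and $e_i>0$), the closed form gives $R^\alpha_{s(e)}(e)=R^\alpha_i(e)=\alpha_i e_i$, since no inflow arrives from upstream. Now fix a value $t>0$. For each $i\in\{1,\ldots,n-1\}$, consider the profile $e^{(i)}$ defined by $e^{(i)}_i=t$ and $e^{(i)}_j=0$ for all $j\neq i$; its source is exactly agent $i$. Applying equal sources to the pair $e^{(i)}$ and $e^{(i+1)}$ — both of which have a source with inflow $t$ — yields $R^\alpha_i(e^{(i)})=R^\alpha_{i+1}(e^{(i+1)})$, that is, $\alpha_i t=\alpha_{i+1}t$, hence $\alpha_i=\alpha_{i+1}$. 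Chaining this across $i=1,\ldots,n-2$ gives $\alpha_1=\cdots=\alpha_{n-1}$, so $R^\alpha$ is the single-parameter geometric rule of parameter $\gamma=\alpha_1$, completing the proof.

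The only mild subtlety — and the step I would treat most carefully — is the boundary behavior of the definition of $s(e)$: it is defined as $\min\{k\in\{1,\ldots,n-1\}: e_k>0\}$, so it ranges only over non-terminal agents, and the profiles $e^{(i)}$ above indeed have a well-defined source for every $i\in\{1,\ldots,n-1\}$. (One should note that a profile in which only agent $n$ has positive inflow has no source in this sense, but such profiles are irrelevant to the argument.) I would also confirm that each $e^{(i)}\in\mathcal{D}$, i.e., that it corresponds to a genuine profile of inflows — which is immediate since $\mathcal{D}=\mathbb{R}^n_+$ and $e^{(i)}\in\mathbb{R}^n_+$. Beyond these bookkeeping points, no real obstacle arises: the heavy lifting has already been done in Theorem~\ref{thm_generalized_geom}, and equal sources is a clean one-parameter-at-a-time pinning-down condition.
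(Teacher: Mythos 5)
Your proposal is correct and follows essentially the same route as the paper: invoke Theorem \ref{thm_generalized_geom} to reduce to multi-parameter geometric rules, then apply \emph{equal sources} to the single-spike profiles (where $\alpha_i$ equals the amount awarded to the source) to force $\alpha_1=\cdots=\alpha_{n-1}$. The verification that single-parameter geometric rules satisfy equal sources via $R^\gamma_{s(e)}(e)=\gamma e_{s(e)}$ is exactly the "straightforward" check the paper leaves implicit.
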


Finally, adding neutrality instead of equal sources 
we characterize the serial rule. Actually, scale invariance can be dismissed from the statement (although the serial rule does satisfy that axiom too).

\begin{theorem} \label{thm_shapley}
A rule satisfies partial-implementation invariance, upstream invariance, and neutrality if and only if it is the serial rule.
\end{theorem}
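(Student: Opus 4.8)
The plan is to proceed in two directions. For the "if" direction, I would verify that the serial rule satisfies the three axioms. Scale invariance is immediate from the linear formula. For upstream invariance, note that $R^S_k(e) = \sum_{j\le k} e_j/(n-j+1)$ depends only on $e_1,\dots,e_k$, so changing $e_i$ with $i>k$ leaves $R^S_k$ untouched. For partial-implementation invariance, I would compute directly: when we zero out the inflows of agents $1,\dots,i-1$ and replace $e_i$ by $e_i + \sum_{k=1}^{i-1}(e_k - R^S_k(e))$, I need to check that the serial formula applied to downstream agents $j\ge i$ returns the same value. The key identity is that $\sum_{k=1}^{i-1}(e_k - R^S_k(e))$ equals $\sum_{k=1}^{i-1} e_k$ minus the total awarded upstream, and tracing through the telescoping sums should show that the "effective inflow at $i$" carries exactly the right weight $1/(n-i+1)$ for agent $i$ and propagates correctly downstream. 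For neutrality, in a canonical problem where only $e_i>0$, we get $R^S_i(e) = e_i/(n-i+1)$ and $R^S_j(e) = e_i/(n-i+1)$ for all $j>i$, so the source gets exactly the average of the downstream awards (in fact equal division downstream), hence neutrality holds trivially.

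For the "only if" direction, I would invoke Theorem 1: any rule $R$ satisfying partial-implementation invariance, upstream invariance, and scale invariance is a multi-parameter geometric rule $R^\alpha$ for some $\alpha=(\alpha_1,\dots,\alpha_{n-1},1)$. Wait — Theorem 3 does not assume scale invariance. So the first task is to recover scale invariance, or else argue directly. The cleanest route is to show that partial-implementation invariance, upstream invariance, and neutrality together imply scale invariance, and then apply Theorem 1; alternatively, re-run the argument of Theorem 1 without scale invariance and see what structure survives, then pin it down with neutrality. I would pursue the first route: examine canonical problems (only the source active) and use neutrality plus the invariance axioms to force homogeneity of degree one along rays, which combined with upstream invariance and partial-implementation invariance should propagate to all profiles.

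Granting that $R = R^\alpha$ for some parameter vector $\alpha$, it remains to show neutrality forces $\alpha_i = 1/(n-i+1)$ for every $i\in\{1,\dots,n-1\}$, which is exactly the serial rule (one checks $R^S = R^\alpha$ for this choice of $\alpha$ via the explicit formulas). I would do this by induction on the source position $i$, applying neutrality to the canonical problem $e = (0,\dots,0,e_i,0,\dots,0)$ with $e_i>0$ in position $i$. For such a problem, $R^\alpha_i(e) = \alpha_i e_i$, and the downstream awards $R^\alpha_j(e)$ for $j>i$ are determined by the recursion with only the residual $(1-\alpha_i)e_i$ flowing past agent $i$; neutrality equates $\alpha_i e_i$ with the average of these downstream awards. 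Summing the downstream awards telescopes to $(1-\alpha_i)e_i$ (since the allocation is non-wasteful and nothing is retained upstream of $i$), so neutrality reads $\alpha_i e_i = (1-\alpha_i)e_i/(n-i)$, giving $\alpha_i(n-i) = 1-\alpha_i$, i.e. $\alpha_i = 1/(n-i+1)$, as desired.

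The main obstacle I anticipate is the first part of the "only if" direction: Theorem 3 is stated without scale invariance, so I cannot simply cite Theorem 1 as a black box. I expect the bulk of the work to lie in showing that neutrality, in the presence of the two invariance axioms, supplies enough leverage to recover the multi-parameter geometric structure (or scale invariance) — this likely requires carefully chaining partial-implementation invariance across canonical problems to build up arbitrary profiles, a step that is conceptually routine but notationally delicate.
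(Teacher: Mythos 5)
Your ``if'' direction and the endgame of your ``only if'' direction match the paper: on the canonical problem with only $e_i>0$, non-wastefulness makes the downstream awards sum to the residual, and neutrality then forces $\alpha_i=1/(n-i+1)$, exactly as in the paper (for the ``if'' part the paper avoids your sketched telescoping check of partial-implementation invariance by simply noting that the serial rule is the multi-parameter geometric rule with $\alpha_i=1/(n-i+1)$ and citing Theorem~\ref{thm_generalized_geom}). The genuine gap is the step you yourself flag as the ``main obstacle'' and leave unexecuted: obtaining the geometric/recursive structure without scale invariance. You propose either to first derive scale invariance from the three axioms and then cite Theorem~\ref{thm_generalized_geom}, or to re-run its argument without scale invariance, but you carry out neither, and the route you say you would pursue (forcing homogeneity along rays on canonical problems and ``propagating'' it to all profiles) is the roundabout one. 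The paper's resolution is its Lemma~\ref{UI+PII}, proved by induction on the number of leading zero inflows using \emph{only} upstream invariance and partial-implementation invariance: any such rule has the recursive form in which each agent $i<n$ receives $\alpha_i(d_i)$, where $d_i$ is its disposable inflow and $\alpha_i(r):=R_i(0,\ldots,0,r,0,\ldots,0)$ is an arbitrary function with $0\le\alpha_i(r)\le r$, while agent $n$ takes the residual. No linearity, hence no scale invariance, is needed to reach this structure.

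With that lemma in hand, neutrality finishes the job pointwise rather than along rays: in the canonical problem with $e_i=r>0$, the downstream total is $r-\alpha_i(r)$, so neutrality reads $\alpha_i(r)=\frac{1}{n-i}\left(r-\alpha_i(r)\right)$, i.e.\ $\alpha_i(r)=\frac{r}{n-i+1}$ for every $r\ge 0$. The retention functions come out linear automatically, so scale invariance never has to be recovered as a separate lemma, and $R$ is the multi-parameter geometric rule with $\alpha_i=1/(n-i+1)$, i.e.\ the serial rule. So your plan identifies the right place to apply neutrality, but as written it does not prove the theorem: the structural step on which everything rests is only gestured at, and the specific observation that makes it cheap --- that the recursion with possibly nonlinear, agent-specific retention functions already follows from the two invariance axioms alone --- is missing from your proposal.
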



\subsection{Downstream impartiality}

\cite{Mart2025} provide counterpart characterizations to those in Theorems \ref{thm_generalized_geom}-\ref{thm_shapley} replacing partial implementation invariance by an alternative axiom, dubbed \emph{downstream impartiality}. This axiom, akin to upstream invariance, refers to the impact an increase in the inflow of one agent has on the allocation of the others. However, it focuses on the consequences for agents located downstream of the increase. Specifically, this axiom states that if two downstream agents have equal inflows, then the impact should be the same for both of them. Formally,

\textbf{Downstream impartiality}. Let $e,e' \in \mathcal{D}$ be such that $e_i<e'_i$ for some $i \in N$ and $e'_j = e_j$ for all $j \in N \backslash \{i\}$. Then, for each pair $k,l>i$ such that $e_k=e_l$,
$$
R_k(e') - R_k(e) = R_l(e') - R_l(e).
$$

For ease of exposition, we reproduce here the characterization results in \cite{Mart2025}. 

\begin{proposition}\label{thm_add} The following statements hold:
\begin{itemize}
\item A rule satisfies downstream impartiality, upstream invariance and scale invariance if and only if there exists $\delta=(\delta_1, \ldots, \delta_n) \in [0,1]^{n-1} \times \{1\}$ such that, for each $e \in \mathcal{D}$ and $i \in N$,
$$
R^\delta_i(e) = \delta_i e_i + \sum_{k<i} \frac{(1-\delta_k)e_k}{n-k}.
$$
\item A rule satisfies downstream impartiality, upstream invariance, scale invariance and equal sources if and only if there exists $\lambda\in[0,1]$ such that, for each $e \in \mathcal{D}$, and each $i= 1,2,3,\dots n-1$,
$$
R_i^{\lambda}(e) =  
\lambda e_i+ (1-\lambda) \sum_{j<i} \frac{e_j}{n-j},
$$
and
$$
R^{\lambda}_n(e) =e_n + (1-\lambda)\sum_{j<n} \frac{e_j}{n-j}.
$$
\item A rule satisfies downstream impartiality, upstream invariance, and neutrality if and only if it is the serial rule.
\end{itemize}
\end{proposition}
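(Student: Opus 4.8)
\emph{Overall plan and the ``if'' directions.} I would establish the three ``only if'' directions by a common two-step argument, dispatch the ``if'' directions by direct verification, and obtain the second bullet as a corollary of the first. For the ``if'' part, one checks that $R^\delta$ is a bona fide rule --- feasibility of the partial sums follows from $\delta_i\le 1$ and $\tfrac{m-k}{n-k}\le 1$, and non-wastefulness because each residual $(1-\delta_k)e_k$ is split among the $n-k$ agents downstream of $k$ --- and that it satisfies scale invariance (the formula is homogeneous of degree one), upstream invariance ($R^\delta_i(e)$ depends only on $e_1,\dots,e_i$), and downstream impartiality (raising $e_i$ by $\Delta$ increases every $R^\delta_k(e)$ with $k>i$ by the common amount $\tfrac{(1-\delta_i)\Delta}{n-i}$). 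The serial rule is the instance of $R^\delta$ with $\delta_k=\tfrac1{n-k+1}$, so it inherits these properties, and it is neutral because on $t\mathbf 1_i$ the source and each of the $n-i$ downstream agents all receive $\tfrac{t}{n-i+1}$. The rule $R^\lambda$ is the instance $\delta_1=\dots=\delta_{n-1}=\lambda$, and it satisfies equal sources because its source award equals $\lambda e_{s(e)}$, a function of $e_{s(e)}$ alone.

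\emph{The ``only if'' directions: canonical profiles.} Fix a rule $R$ satisfying the relevant axioms and, for $t\ge 0$, consider the scaled unit profile $t\mathbf 1_i$. Non-wastefulness forces $R(0)=0$; applying downstream impartiality to the pair $0$ and $t\mathbf 1_i$ --- which differ only in coordinate $i$, and in which every agent $k>i$ carries inflow $0$ --- gives $R_k(t\mathbf 1_i)=R_l(t\mathbf 1_i)$ for all $k,l>i$, while feasibility gives $R_k(t\mathbf 1_i)=0$ for $k<i$. Hence $R(t\mathbf 1_i)=g_i(t)\,\mathbf 1_i+\tfrac{t-g_i(t)}{n-i}\sum_{k>i}\mathbf 1_k$ for some $g_i$ with $g_i(0)=0$, $0\le g_i(t)\le t$, and $g_n(t)=t$. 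Scale invariance makes $g_i$ linear, $g_i(t)=\delta_i t$ with $\delta_i\in[0,1]$; neutrality turns the defining identity into $g_i(t)=\tfrac{t-g_i(t)}{n-i}$, i.e.\ $g_i(t)=\tfrac{t}{n-i+1}$ (so linearity is automatic and scale invariance is not needed in the third bullet).

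\emph{The ``only if'' directions: reconstruction.} It remains to prove that every profile satisfies
\[
R_i(e)=g_i(e_i)+\sum_{k<i}\frac{e_k-g_k(e_k)}{n-k}.
\]
I would argue by induction on $n$. If $e_1=0$, then $R_1(e)=0$ and the restriction of $R$ to agents $\{2,\dots,n\}$ is a rule on an $(n-1)$-agent river inheriting upstream invariance, downstream impartiality, and scale invariance or neutrality as appropriate (the neutrality condition transferring correctly under the relabeling), so the displayed formula for such $e$ follows from the induction hypothesis. For $e_1>0$ I would run a secondary induction on the agent index $j$: the case $j=1$ is $R_1(e)=g_1(e_1)$, immediate from upstream invariance and the canonical computation. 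For the inductive step, by upstream invariance one may assume $e$ supported on $\{1,\dots,j\}$; pass to the auxiliary profile $\tilde e$ obtained by raising every inflow downstream of $j$ to the common value $e_j$, so that $R_j(\tilde e)=R_j(e)$ and $R_k(\tilde e)=R_k(e)$ for $k<j$ (already known). Comparing $\tilde e$ with the profile $\tilde e^0$ that differs from it only by setting $e_1=0$ --- which is covered by the $e_1=0$ case --- the inflow of agent $1$ is the only coordinate that moves, and agents $j,\dots,n$ carry the common inflow $e_j$ in both profiles; downstream impartiality then forces $R_k(\tilde e)-R_k(\tilde e^0)$ to be a single number $\Delta$ for all $k\ge j$, whereupon non-wastefulness (all other awards being known) determines $\Delta$, hence $R_j(\tilde e)$, and a short computation returns the displayed formula. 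Substituting $g_i(t)=\delta_i t$ gives $R^\delta$ and $g_i(t)=\tfrac{t}{n-i+1}$ gives $R^S$. The second bullet is then immediate: equal sources applied to the canonical profiles forces $\delta_i t=R_i(t\mathbf 1_i)=R_j(t\mathbf 1_j)=\delta_j t$ for all $i,j\in\{1,\dots,n-1\}$, so $\delta_1=\dots=\delta_{n-1}=:\lambda$ and $R^\delta$ collapses to the stated $R^\lambda$.

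\emph{The main obstacle.} The crux is the reconstruction, and specifically coaxing downstream impartiality into biting: it says nothing unless some downstream inflows coincide, so $R_j(e)$ cannot be read off from $e$ directly. The maneuver that unlocks the calculation is to replace $e$ by the auxiliary profile $\tilde e$ in which all inflows strictly downstream of $j$ are equalized --- so that agents $j,\dots,n$ form a single block governed by downstream impartiality --- and then transport the conclusion back along the river via upstream invariance, with zeroing $e_1$ as the companion step organizing the induction on $n$. One must also check that the restricted rule on $\{2,\dots,n\}$ genuinely inherits every relevant axiom (including the correctly rescaled form of neutrality) and keep the bookkeeping identity ``there are $n-k$ agents downstream of $k$'' straight throughout; these are routine but must not be skipped.
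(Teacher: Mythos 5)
Your argument is essentially correct, but note first that this paper does not itself prove Proposition~\ref{thm_add}: it is reproduced from \cite{Mart2025}, so there is no in-paper proof to compare against. Judged on its own merits, your proof is sound. The canonical-profile step correctly pins down $g_i$ (feasibility forces zero awards upstream, downstream impartiality against the null profile equalizes the downstream awards, and then scale invariance gives $g_i(t)=\delta_i t$ while neutrality gives $g_i(t)=t/(n-i+1)$, indeed without scale invariance), and the reconstruction works: comparing $\tilde e$ with $\tilde e^{\,0}$, non-wastefulness together with the already-known awards of agents $1,\dots,j-1$ at both profiles yields the common downstream increment $\Delta=(e_1-g_1(e_1))/(n-1)$, which returns exactly the displayed formula for $R_j$, and the equal-sources and neutrality specializations then fall out as you say. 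Your route is genuinely different from the technique this paper uses for its own results (Lemma~\ref{UI+PII} and Theorems~\ref{thm_generalized_geom}--\ref{thm_shapley}): there, partial-implementation invariance supplies an absorption device --- the unallocated upstream residual is folded into the next agent's inflow --- so a single induction on the number of leading non-null inflows suffices. Since downstream impartiality provides no such device, your substitute maneuver (equalize all inflows strictly downstream of $j$ so that impartiality bites on the whole block $\{j,\dots,n\}$, transport back via upstream invariance, and zero out $e_1$ to organize the outer induction) is the right idea and is what lets the serial-rule case dispense with scale invariance. Two routine points should still be nailed down in a full write-up: the base case of the induction on $n$ (the model assumes $n\ge 3$, so either verify the two-agent case directly or recast the outer induction as one on the index of the first positive inflow within a fixed $n$), and the explicit check that the restriction to $\{2,\dots,n\}$ is a bona fide rule (feasibility and non-wastefulness) inheriting each axiom with the correct relabeling, which you flag but do not carry out.
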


At first glance, the structures of the families characterized in Theorem \ref{thm_generalized_geom} and Proposition \ref{thm_add} (first item) are very different. They have a multiplicative and additive form, respectively. However, as stated in both theorems, both satisfy the axioms of scale invariance and upstream invariance. They only differ in one axiom (partial-implementation invariance and downstream impartiality, respectively). A natural question arising from here is whether those axioms are mutually exclusive. The following result demonstrates that they are not. 

Theorem \ref{thm_add_geom} actually characterizes the rules that satisfy those two axioms (partial implementation invariance and downstream impartiality), together with the remaining two generic axioms (scale invariance and upstream invariance). The resulting family is particularly intriguing, as its construction is derived from the no-transfer and serial rules, albeit not through the conventional combinations (convex, midpoint, or weighted) one might expect. 

Formally, let $\beta=(\beta_1,\ldots,\beta_{n-1}) \in [0,1)^{n-1}$. For each $k=1,\dots n-1$, we define the rule $R^{\beta_k}$ as the following single-parameter geometric $R^{\alpha(\beta_k)}$. 

For each $e \in \mathcal{D}$ and each $i \in N$,
$$
R^{\beta_k}_i(e)=R_i^{\alpha(\beta_k)}(e),
$$
where 
$$
\alpha_i(\beta_k) =
\begin{cases}
1 & \text{if } i<k \\
\beta_k & \text{if } i=k \\
\dfrac{1}{n-i+1} & \text{if } k<i<n-1.
\end{cases}
$$

The $\beta$-family, $R^\beta$, is made of all the $R^{\beta_k}$ rules, as 
described above. 
As we can see in Table \ref{table_beta}, these parameters exhibit a particular pattern, integrating both the no-transfer rule ($\alpha_i(\beta_k)=1$) and the serial rule ($\alpha_i(\beta_k)=\frac{1}{n-i+1}$). In each case, $\alpha_k=\beta_k$ is free, whereas for those upstream it is as in the no-transfer rule and for those downstream it is as in the serial rule.

\begin{table}[H]
\begin{center}
\begin{tabular}{cccccccc}
\toprule
Rule & $\alpha_1$ & $\alpha_2$ & $\alpha_3$ & $\alpha_4$ & $\cdots$ & $\alpha_{n-1}$ & $\alpha_n$ \\
\midrule
$R^{\beta_1}$ &$\beta_1$ & $\frac{1}{n-1}$ & $\frac{1}{n-2}$ & $\frac{1}{n-3}$ & $\cdots$ & $\frac{1}{2}$ & 1 \\
$R^{\beta_2}$ &$1$ & $\beta_2$ & $\frac{1}{n-2}$ & $\frac{1}{n-3}$ & $\cdots$ & $\frac{1}{2}$ & 1 \\
$R^{\beta_3}$ &$1$ & $1$ & $\beta_3$ & $\frac{1}{n-3}$ & $\cdots$ & $\frac{1}{2}$ & 1 \\
$R^{\beta_4}$ &$1$ & $1$ & $1$ & $\beta_4$ & $\cdots$ & $\frac{1}{2}$ & 1 \\
$\vdots$ &$\vdots$ & $\vdots$ & $\vdots$ & $\vdots$ & $\vdots$ & $\vdots$ & $\vdots$ \\
$R^{\beta_{n-1}}$ &1 & 1 & 1 & 1 & 1 & $\beta_{n-1}$ & 1 \\
$R^{NT}$ &1 & 1 & 1 & 1 & 1 & 1 & 1 \\
\bottomrule
\end{tabular}
\end{center}
\caption{Structure of the parameters in the $\beta$-family.\label{table_beta}}
\end{table}

\begin{theorem}\label{thm_add_geom}
A rule satisfies partial implementation invariance, downstream impartiality, upstream invariance and scale invariance if and only if it belongs to the $\beta$-family.
\end{theorem}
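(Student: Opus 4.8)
The plan is to leverage the characterizations already in hand. Since partial-implementation invariance, upstream invariance, and scale invariance are assumed, Theorem \ref{thm_generalized_geom} tells us immediately that the rule is a multi-parameter geometric rule $R^\alpha$ for some $\alpha=(\alpha_1,\dots,\alpha_{n-1},1)\in[0,1]^{n-1}\times\{1\}$. So the entire content of the theorem reduces to determining exactly which vectors $\alpha$ give a rule that \emph{also} satisfies downstream impartiality, and then checking that the resulting set coincides with the $\beta$-family described in Table \ref{table_beta}. The converse direction (every member of the $\beta$-family satisfies all four axioms) is then a matter of routine verification: each $R^{\beta_k}$ is by construction a single-parameter-style (more precisely, multi-parameter) geometric rule, hence automatically satisfies the first three axioms by Theorem \ref{thm_generalized_geom}, so only downstream impartiality needs to be checked directly from the closed form $R^\alpha_i(e)=\alpha_i\bigl[e_i+\sum_{k<i}\prod_{j=k}^{i-1}(1-\alpha_j)\,e_k\bigr]$.

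For the forward direction, the key step is to translate downstream impartiality into an algebraic condition on $\alpha$. First I would compute, for $R^\alpha$, the marginal effect on a downstream agent $l>i$ of a unit increase in $e_i$: from the closed form this marginal effect is $\alpha_l\prod_{j=i}^{l-1}(1-\alpha_j)$ (one must be careful that increasing $e_i$ also feeds into the disposable inflows of all agents between $i$ and $l$, but the multiplicative structure makes the total derivative clean). Downstream impartiality then says: whenever $i<k<l$ (or $i<l<k$) with $e_k=e_l$, the two marginal responses coincide. Because the marginal response of agent $l$ does not actually depend on the value $e_l$, the hypothesis $e_k=e_l$ can always be arranged (take a profile with $e_k=e_l$), so the condition becomes the \emph{unconditional} requirement
\[
\alpha_k\prod_{j=i}^{k-1}(1-\alpha_j)=\alpha_l\prod_{j=i}^{l-1}(1-\alpha_j)\qquad\text{for all }i<k<l\le n,
\]
with the convention $\alpha_n=1$. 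Dividing consecutive instances, this telescopes to $\alpha_{k+1}=\alpha_k(1-\alpha_k)/\bigl(\alpha_k\cdot\text{(something)}\bigr)$; carrying out the recursion carefully, one finds that once the first index $k^\ast$ with $\alpha_{k^\ast}\ne 1$ is fixed, all subsequent coordinates are forced: $\alpha_i=\frac{1}{n-i+1}$ for $i>k^\ast$, while $\alpha_i=1$ for $i<k^\ast$ (the latter because, if some $\alpha_i=1$ with $i<k^\ast$... actually the definition of $k^\ast$ gives this for free) and $\alpha_{k^\ast}=\beta_{k^\ast}\in[0,1)$ is unconstrained. This is precisely the content of Table \ref{table_beta}. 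One should also handle separately the degenerate case where \emph{no} index has $\alpha_i\ne 1$, which yields $R^{NT}$, the bottom row of the table — and note this is consistent with allowing $\beta_k\in[0,1)$ only, so $R^{NT}$ is a boundary/limit member listed explicitly.

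The main obstacle I anticipate is the bookkeeping in the recursion: the downstream-impartiality equalities must be applied for \emph{every} upstream index $i$ simultaneously, and extracting from this system the clean statement "first free coordinate, then forced serial tail" requires choosing the comparisons in the right order (e.g. fix $i=k-1$ and vary $l$, then combine across different $i$) and being attentive to the boundary index $n$ where $\alpha_n=1$ is imposed rather than free — this is what pins down the value $\frac{1}{n-i+1}$ rather than leaving a one-parameter ambiguity in the tail. A secondary subtlety is making sure the "$e_k=e_l$" restriction in the axiom genuinely imposes no loss of generality; this hinges on the observation that in a geometric rule the \emph{derivative} $\partial R_l/\partial e_i$ is independent of $e_l$ itself, which should be stated explicitly before invoking it. Once these two points are handled, both inclusions follow by direct computation with the closed-form expression for $R^\alpha$.
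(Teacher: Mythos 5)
Your proposal is correct in substance but takes a genuinely different route from the paper. For necessity, the paper does not impose downstream impartiality directly on the geometric form: it invokes its earlier Proposition (the additive $R^\delta$ characterization of downstream impartiality + upstream invariance + scale invariance) together with Theorem \ref{thm_generalized_geom}, so the rule lies in the intersection of the $R^\alpha$ and $R^\delta$ families, and then pins down the parameters by evaluating both closed forms on the unit inflow vectors $e^i$, organized as a nested case analysis on whether $\alpha_1=1$, then $\alpha_2=1$, etc. For sufficiency, the paper likewise shows each $R^{\beta_k}$ coincides with an $R^{\delta^k}$ and lets Proposition \ref{thm_add} deliver downstream impartiality, rather than checking the axiom directly. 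Your route bypasses Proposition \ref{thm_add} entirely: you exploit that $R^\alpha$ is linear in each $e_i$ with $\partial R^\alpha_l/\partial e_i=\alpha_l\prod_{j=i}^{l-1}(1-\alpha_j)$ (with $\alpha_n=1$), correctly observe that the ``$e_k=e_l$'' hypothesis is vacuous because these derivatives do not depend on $e_l$, and reduce the axiom to the unconditional system $\alpha_k\prod_{j=i}^{k-1}(1-\alpha_j)=\alpha_l\prod_{j=i}^{l-1}(1-\alpha_j)$ for all $i<k<l\le n$ — which is essentially the same system of equations the paper obtains via the unit vectors, but derived self-containedly. What your approach buys is independence from the companion paper's result; what it costs is the bookkeeping you yourself flag, and your sketch leaves one step genuinely unfinished: the telescoping formula with the ``(something)'' placeholder. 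To close it, fix $k^\ast$ as the first index with $\alpha_{k^\ast}\neq 1$, first rule out $\alpha_m=1$ for any $k^\ast<m<n$ (compare the pair $(k^\ast,m,n)$: the left side is nonzero by minimality of $m$ while the right side contains the factor $1-\alpha_m=0$), and only then divide to get the consecutive relations $\alpha_k=\alpha_{k+1}(1-\alpha_k)$, which with the boundary condition $\alpha_n=1$ solve backwards to $\alpha_i=\tfrac{1}{n-i+1}$ for $i>k^\ast$; together with $\alpha_i=1$ for $i<k^\ast$ and $\alpha_{k^\ast}\in[0,1)$ free (or the degenerate no-transfer case), this is exactly the $\beta$-family, and the same derivative computation verifies sufficiency.
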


\section{Application. The Nile River.}

The Nile River, the world's longest river at about 6,650 km, flows north from south of the Equator through northeastern Africa into the Mediterranean Sea. Its basin spans 11 countries, including Tanzania, Uganda, South Sudan, Ethiopia, Sudan, and Egypt. Around 250 million people depend on it for water, and approximately 10\% of them already face water scarcity. 
It is therefore not surprising that disputes over Nile water have persisted for decades. Colonial-era treaties still shape its allocation. 
These treaties grant large amounts of water to downstream countries (primarily Egypt and Sudan), mostly ignoring upstream riparian states (such as Kenya, Tanzania and South Sudan). Additionally, they are prohibited from constructing dams or initiating river projects without approval from downstream nations, particularly Egypt. The upstream countries have consistently disputed and questioned these agreements. While Egypt and Sudan demand that their water share must be upheld, the upstream states argue that these agreements are unfair and harm their agricultural and developmental objectives.\footnote{\url{https://arabcenterdc.org/resource/water-conflict-between-egypt-and-ethiopia-a-defining-moment-for-both-countries/}. Last accessed October 30th, 2025.}

In this section, we apply the previously introduced (and characterized) rules to provide a framework for cooperative action in the Nile River basin. As mentioned in the introduction, we acknowledge that allocating riparian water rights in the Nile River is a highly complex issue, influenced by historical, political, cultural, economic, and social factors. We nevertheless believe that our empirical illustration, limited as might be by the theoretical model we consider (which, admittedly, does not encompass all relevant factors at stake), still offers valuable insights. 

The Nile River basin is formed by its two main tributaries: the White Nile, originating in the Great Lakes region and flowing through Uganda and South Sudan; and the Blue Nile, which begins at Lake Tana in Ethiopia. These tributaries converge in Khartoum (Sudan), after which the Nile River continues north through Sudan and Egypt before finally emptying into the Mediterranean Sea. Figure \ref{graphNile} provides a visual overview of its course and the major countries within its basin. Figure \ref{fig:NileBasin} reproduces its whole basin.

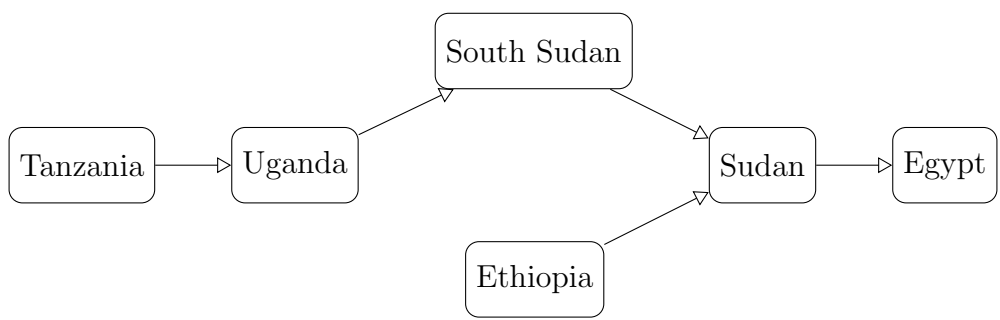
\begin{figure}[h]
\centering
\begin{tikzpicture}[
  node distance=0.5cm and 1cm,
  every node/.style={draw, rectangle, minimum width=1.2cm, minimum height=1cm, align=center, rounded corners=5pt},
  ->, >={open triangle 60}
]
\node (A) {Tanzania};
\node (B) [right=of A] {Uganda};
\node (C) [above right=of B] {South Sudan};
\node (D) [below right=of C] {Sudan};
\node (E) [right=of D] {Egypt};
\node (F) [below right=of B, xshift=0.4cm] {Ethiopia};
\draw (A) -- (B);
\draw (B) -- (C);
\draw (C) -- (D);
\draw (D) -- (E);
\draw (F) -- (D);
\end{tikzpicture}
\caption{Flow structure of the Nile River.\label{graphNile}}
\end{figure}

We use AQUASTAT to estimate the inflows and actual water allocations for the six countries analyzed.\footnote{AQUASTAT, developed by the Food and Agriculture Organization (FAO), is a global information system focused on water resources and agricultural water management. It compiles, analyzes, and provides free access to over 180 variables and indicators by country. Free access to data is provided at \url{https://www.fao.org/aquastat/en/databases/}} 

As for inflows, Lake Victoria contributes a substantial 33 km$^3$/year to the Nile River. Its surface is divided among three countries: Tanzania (51\%), Uganda (43\%) and Kenya (6\%). Tanzania's inflow accounts for 51\% of Lake Victoria's contribution to the Nile River. Uganda's inflow is a combination of its portion of Lake Victoria's contribution (based on surface area) and water from Lake Albert, estimated at 2 km$^3$/year. In South Sudan, the Nile River receives water from two major tributaries: the Bahr al Ghazal River, which flows almost entirely within the country and discharges 1.5 km$^3$/year, and the Sobat River, formed by the confluence of the Pibor and Baro Rivers, contributing 3.1 and 13 km$^3$/year respectively. Together, South Sudan's inflow totals 17.6 km$^3$/year. Ethiopia's contribution comes from the Blue Nile, originating at Lake Tana, with an estimated inflow of 52.6 km$^3$/year. The next country in the course, Sudan, receives an additional 0.7 km$^3$/year from the Atbara River. Finally, according to AQUASTAT, Egypt relies entirely on upstream inflows, as it does not contribute water to the Nile River. The resulting amounts described in this paragraph for the six countries are listed in the second column of Table \ref{table1} ($e$). 

\begin{figure}[h]
  \centering
  \includegraphics[scale=0.7]{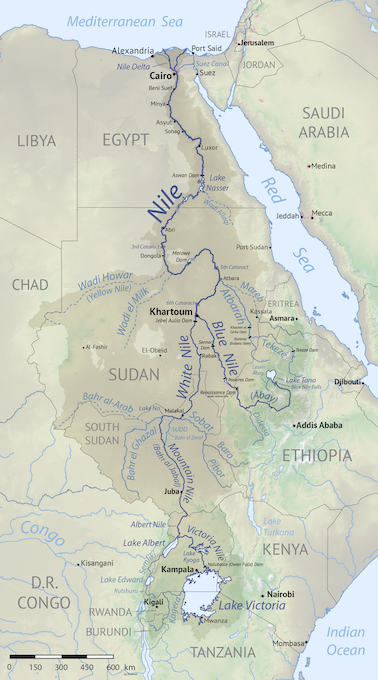}
  \caption{Nile River Basin \label{fig:NileBasin}}
\end{figure}

To estimate the actual water allocation among the countries along the Nile River, we have considered as a proxy the so-called \emph{total freshwater withdrawal} variable in AQUASTAT. The withdrawals from Tanzania, Uganda, South Sudan, Ethiopia, Sudan, and Egypt amount to 5.18 km$^3$/year, 0.64 km$^3$/year, 0.66 km$^3$/year, 10.55 km$^3$/year, 26.93 km$^3$/year and 77.7 km$^3$/year, respectively. The cumulative allocation of these withdrawals totals 121.66km$^3$/year, which is slightly above the overall inflow of 103.9 km$^3$/year. To be consistent with our theoretical model, we have proportionally adjusted the withdrawals so that the total sum is also 103.9 km$^3$/year. These adjustments are listed into the third column of Table \ref{table1} ($z$).

\begin{table}[H]
    \centering
    \begin{tabular}{lrrrrrrrrr}
    \toprule
    & & & \multicolumn{7}{c}{Water rights} \\
    \cmidrule(lr){4-10}
    Countries & $e$ & $z$ & $R^{FT}$ & $R^{\gamma=\frac{1}{4}}$ & $R^{\gamma=\frac{1}{2}}$ & $R^{\gamma=\frac{3}{4}}$ & $R^{NT}$ & $R^{S}$ & $R^{g^*}$  \\ 
    \midrule 
    Tanzania    & 16.8 & 4.42  & 0     & 4.2   & 8.4   & 12.6  & 16.8 & 3.36  & 4.42 \\
    Uganda      & 16.2 & 0.55  & 0     & 7.2   & 12.3  & 15.3  & 16.2 & 7.41  & 0.55 \\
    South Sudan & 17.6 & 0.56  & 0     & 9.8   & 14.95 & 17.02 & 17.6 & 13.28 & 0.56 \\
    Ethiopia    & 52.6 & 9.01  & 0     & 13.15 & 26.3  & 39.45 & 52.6 & 11.53 & 9.01 \\
    Sudan       & 0.7  & 22    & 0     & 17.38 & 20.97 & 14.64 & 0.7  & 31.16 & 22\\
    Egypt       & 0    & 66.35 & 103.9 & 52.16 & 20.97 & 4.88  & 0    & 31.16 & 66.35\\
    \bottomrule
    \end{tabular} 
    \caption{Inflows (e), actual allocations (z) and water-use rights in the Nile River ($g^*=(0.26,0.02,0.01,0.17,0.26)$).\label{table1}}
\end{table}

Table \ref{table1} yields different allocations coming from the rules characterized in our analysis.\footnote{Note that, in order to present these results, we have adjusted the rules from our analysis accordingly to account for the actual structure of the Nile River, which is not entirely linear, due to the location of Ethiopia.} 

Its last column actually provides the allocation that the multi-parameter geometric rule $R^{g^*}$ yields, when $g^*=(0.26,0.02,0.01,0.17,0.26)$. This happens to coincide with the actual water allocation ($z$). That is, our family of multi-parameter geometric rules rationalizes the actual water allocation by means of imposing upstream countries they retain rather small portions of their inflows (particularly, Uganda and South Sudan). 

Alternatively, we also provide the allocations some single-parameter geometric rules yield (when all countries are imposed the same portion $\gamma$ of their inflows, with $\gamma\in\{0, \frac{1}{4}, \frac{1}{2}, \frac{1}{4}, 1\}$), as well as the serial rule. With the obvious exception of the no-transfer rule, we can see that these allocations generally favor upstream countries (especially, Uganda and South Sudan) with respect to the actual water allocation.

\section{Discussion}
Our paper can be considered as a new instance of the literature dealing with river sharing, initiated by \cite{Ambec02} and followed by \cite{Ambec2008}, \cite{Ansink12}, \cite{Brink2012}, \cite{Gudmundsson2019}, and \cite{Oeztuerk2020}, among others.\footnote{\cite{beal2013} is a survey of early contributions within this literature.} As in \cite{Mart2025}, we depart from the existing literature to study the adjudication of water rights, rather than the end-state allocation of water. That is, we assume that end-state allocations of water arise after a two-stage process in which rights are assigned first, based only on objective information about agents, and an ensuing market to trade those rights occurs afterwards. We only concentrate on the first part of the process. In doing so, we do not consider utility functions to transform water into welfare, 
which renders our model less demanding from an informational viewpoint. \cite{Ansink12} also analyze river sharing problems without utility functions. But, in their setting, agents also have claims (and not only inflows). Allocations can violate the feasibility constraints we impose in our analysis (which might be interpreted as allowing that water could be transferred upwards via external methods), but claims constitute upper bounds for allocations. Thus, the rules in their setting largely differ from the ones we consider here. 

A similar model is also used to analyze the problem of cleaning a polluted river, where the input is the cost associated to each agent located along the river \citep[e.g.,][]{Ni2007, Dong2012, AlcaldeUnzu2015, Brink2018}. 
In that setting, only non-wastefulness is imposed from the outset. But no other feasibility constraints are imposed. That is why rules used in that setting might not be well defined in ours. More recently, \cite{yang2025pollute} have studied the so-called \textit{river pollution claims problem}, where the aim is to distribute a budget of emissions permits among agents located along a river. In this case, the 
input is each agent's claim (reflecting population, emission history, and business-as-usual emissions), as well as a budget that is lower (or equal) than the aggregate claim. 
\cite{martinez5193807fair} propose in that setting rules that adjust the geometric rules studied in this paper.

Finally, our model is also similar to the problem of revenue sharing in hierarchies \citep[e.g.,][]{Hougaard17}, in which the issue is to distribute the proceeds generated by agents organized on a hierarchy. In the benchmark case in which the hierarchy is a line, agents are thus just characterized by the location in the line and the revenue they bring to the hierarchy. Transfer rules determine a redistribution of the overall revenue along the line, with the proviso that revenues can only be transferred upwards in the hierarchy. As such, the model can simply be interpreted as the mirror image of ours (in which the bottom agent of the river plays the role of the agent at the top of the hierarchy) and our feasibility constraints for allocations are satisfied by the definition of (upward) transfer rules in the hierarchy. \cite{Hougaard17} characterize the family of geometric transfer rules in their setting, making use of suitable axioms therein that would not be naturally translated into our setting (with the exception of scale invariance, that is considered in both settings).\footnote{Intermediate geometric rules for revenue sharing in hierarchies are reminiscent of popular incentive mechanisms for social mobilization or multi-level marketing \citep[e.g.,][]{Pickard2011}.}   

We have also provided an illustration of our analysis to the case of the Nile River. We have identified our multi-parameter geometric rule rationalizing the actual allocation of water therein. This rule involves that upstream countries retain rather small portions of their inflows (particularly, Uganda and South Sudan). Alternative allocations coming from other rules in our analysis typically yield larger portions for those countries. 

\section{Appendix. Proofs of the results}
We first provide a lemma, which is interesting on its own, and will pave the way for the proofs of our theorems stated above. 

\begin{lemma}\label{UI+PII}
If a rule $R$ satisfies upstream invariance and partial-implementation invariance, then there exist $n-1$ functions $\alpha_1,\ldots,\alpha_{n-1}: \mathbb{R}_+ \longrightarrow \mathbb{R}_+$ such that, $\alpha_i(r) \le r$ for each $i \in \{1,\ldots,n-1\}$ and
$$
R_i(e)=R^{(\alpha_1,\ldots,\alpha_{n-1})}_i(e) = 
\begin{cases}
    \alpha_1(e_1) & \text{if } i=1 \\[0.1cm]
    \alpha_i\left( e_i + \sum_{k=1}^{i-1} \left( e_k-R^{(\alpha_1,\ldots,\alpha_{n-1})}_k(e) \right) \right)  & \text{if } i \in \{2,\ldots,n-1\} \\[0.1cm]
    e_n + \sum_{k=1}^{n-1} \left( e_k-R^{(\alpha_1,\ldots,\alpha_{n-1})}_k(e) \right)  & \text{if } i=n. 
\end{cases}
$$
\end{lemma}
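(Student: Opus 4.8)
The strategy is to build the functions $\alpha_1,\dots,\alpha_{n-1}$ one at a time, using partial-implementation invariance (PII) to "peel off" the most upstream agent and upstream invariance (UI) to guarantee that the amount assigned to each agent depends only on the quantities visible to it. First I would set $\alpha_1(t) := R_1(t,0,\dots,0)$. By UI, $R_1(e)$ is unaffected by changing any $e_j$ with $j\neq 1$, so $R_1(e)=R_1(e_1,0,\dots,0)=\alpha_1(e_1)$ for every profile $e$; this establishes the claimed formula for $i=1$. Feasibility ($\sum_{i\le 1}x_i\le\sum_{i\le 1}e_i$) forces $\alpha_1(e_1)\le e_1$, giving the bound $\alpha_1(r)\le r$.

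Next comes the inductive step. Suppose the formula has been established for agents $1,\dots,i-1$, together with the associated functions $\alpha_1,\dots,\alpha_{i-1}$. Apply PII with index $i-1$ (or more precisely, iterate PII to zero out all of $1,\dots,i-1$): PII says $R_{D(i-1)}(e)=R_{D(i-1)}(0_{U(i-1)\setminus\{i-1\}},\,e_{i-1}+\sum_{k<i-1}(e_k-R_k(e)),\,e_{D(i-1)\setminus\{i-1\}})$, and by applying this successively for $i-1, i-2, \dots$ down to the top, one reduces to a profile in which agents $1,\dots,i-1$ all have inflow zero and agent $i$ carries the accumulated residual $e_i+\sum_{k=1}^{i-1}(e_k-R_k(e))$. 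Here one must check that the residual fed into agent $i$ is exactly this cumulative leftover — this is a bookkeeping computation with the telescoping sums in PII, and is the one genuinely fiddly spot, though not conceptually hard. On this reduced profile, agent $i$ is now effectively the top agent among those with positive inflow upstream of $D(i)$; combining with UI (the value $R_i$ doesn't see inflows strictly downstream) we may further set all of $e_{i+1},\dots,e_n$ to zero without changing $R_i$. Thus $R_i(e)$ equals $R_i$ evaluated at a profile whose only relevant entry is the residual in position $i$, so it is a function of that residual alone: define $\alpha_i(t):=R_i(0,\dots,0,t,0,\dots,0)$ with $t$ in slot $i$, and feasibility again gives $\alpha_i(t)\le t$. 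That is precisely the recursive formula claimed for $i\in\{2,\dots,n-1\}$.

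Finally, for $i=n$, non-wastefulness gives $\sum_{k=1}^n R_k(e)=\sum_{k=1}^n e_k$, hence $R_n(e)=e_n+\sum_{k=1}^{n-1}(e_k-R_k(e))$ automatically, with no need for a separate function; this matches the third branch of the displayed formula. One should also remark that the functions obtained take values in $\mathbb{R}_+$ (each $R_i\ge 0$ since $R$ maps into $\mathbb{R}^n_+$), so the codomain claim $\alpha_i:\mathbb{R}_+\to\mathbb{R}_+$ holds.

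**Main obstacle.** The delicate point is the iterated application of PII in the inductive step: one must verify that repeatedly invoking PII (first zeroing agent $1$, then agent $2$, and so on) composes correctly, i.e. that after zeroing $1,\dots,i-1$ the inflow accumulated at agent $i$ is the telescoped residual $e_i+\sum_{k=1}^{i-1}(e_k-R_k(e))$ and that each intermediate profile still lies in $\mathcal{D}$ (feasibility/non-wastefulness are preserved because the $R_k(e)\le\sum_{j\le k}e_j$ bounds make the injected residuals nonnegative). Once this reduction is in hand, the rest is a direct invocation of UI plus the definitions of the $\alpha_i$.
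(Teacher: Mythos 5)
Your proposal is correct and follows essentially the same route as the paper's proof: define $\alpha_i(t)=R_i(0,\ldots,0,t,0,\ldots,0)$ from single-spike profiles, use partial-implementation invariance to collapse the upstream inflows into the residual at agent $i$, use upstream invariance to remove any dependence on downstream inflows, and invoke feasibility for the bound $\alpha_i(r)\le r$ and non-wastefulness for agent $n$; the paper merely organizes the induction over profiles with leading zeros rather than over the agent index. One simplification: the iterated application of PII that you flag as the ``fiddly spot'' is unnecessary, since a single application of the axiom at index $i$ already zeroes all of $1,\ldots,i-1$ at once and endows agent $i$ with exactly the residual $e_i+\sum_{k<i}\left(e_k-R_k(e)\right)$, so no telescoping bookkeeping is needed.
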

\begin{proof}
Let $e \in \mathcal{D}$ and $R$ be a rule satisfying \textit{upstream invariance} and \textit{partial-implementation invariance}. For each $r\in\mathbb{R}_+$, let 
$$
\alpha_i(r) = R_i(\overbrace{0, \ldots, 0}^{i-1}, r, \overbrace{0, \ldots, 0}^{n-i}).
$$
Notice that, by definition of rule, $\alpha_i(r) \le r$. We proceed by induction on the number of non-null entries in $e$. By definition, $R(0,\ldots,0)=(0,\ldots,0)=R^{(\alpha_1,\ldots,\alpha_{n-1})}(0,\ldots,0)$. Let us define $e^n=(0,\ldots,0,e_n) \in \mathcal{D}$. Again, by definition, 
$$
R(e^n)=(0,\ldots,0,e_n)=R^{(\alpha_1,\ldots,\alpha_{n-1})}(e^n).
$$
Now, let $e^{n-1}=(0,\ldots,0,e_{n-1},e_n) \in \mathcal{D}$. By \emph{upstream invariance}, $R_{n-1}(e^{n-1}) = R_{n-1}(0,\ldots,0,e_{n-1},0) = \alpha_{n-1}(e_{n-1}) = R^{(\alpha_1,\ldots,\alpha_{n-1})}_{n-1}(e^{n-1})$. Therefore,
$$
R(e^{n-1})=(0,\ldots,0, \alpha(e_{n-1}), e_n + (e_{n-1}-\alpha(e_{n-1})))=R^{(\alpha_1,\ldots,\alpha_{n-1})}(e^{n-1}).
$$
Assume now that the claim holds for $e^i=(0,\ldots,0,e_i,e_{i+1},\ldots,e_n)$. We show that it is also true for $e^{i-1}=(0,\ldots,0,e_{i-1},e_i,\ldots,e_n)$. It is obvious that, for each $k<i-1$, $R_k(e^{i-1})=0=R^{(\alpha_1,\ldots,\alpha_{n-1})}_k(e^{i-1})$. Now, by \emph{upstream invariance} (as argued above), $R_{i-1}(e^{i-1}) = \alpha_{i-1}(e_{i-1}) = R^{(\alpha_1,\ldots,\alpha_{n-1})}_{i-1}(e^{i-1})$. Finally, by \emph{partial implementation invariance},
$$
R_k(e^{i-1}) = R_k \left( 0,\ldots,0, e_i+ \sum_{j=1}^{i-1} (e_j-R_j(e^{i-1})),e_{i+1},e_n \right). 
$$
Then, applying the induction hypothesis to the problem with one additional null inflow $\left( 0,\ldots,0, e_i+ \sum_{j=1}^{i-1} (e_j-R_j(e^{i-1})),e_{i+1},e_n \right)$, we obtain that
$$
R_k(e^{i-1})= R^{(\alpha_1,\ldots,\alpha_{n-1})}_k \left( 0,\ldots,0, e_i+ \sum_{j=1}^{i-1} (e_j-R_j(e^{i-1})),e_{i+1},e_n \right).
$$
We have already shown that, for each $j \leq i-1$, $R_j(e^{i-1})=R^{(\alpha_1,\ldots,\alpha_{n-1})}_j(e^{i-1})$. Thus, $\sum_{j=1}^{i-1} (e_j-R_j(e^{i-1})) = \sum_{j=1}^{i-1} (e_j-R^{(\alpha_1,\ldots,\alpha_{n-1})}_j(e^{i-1}))$ and, therefore,
$$
R_k(e^{i-1})= R^{(\alpha_1,\ldots,\alpha_{n-1})}_k \left( 0,\ldots,0, e_i+ \sum_{j=1}^{i-1} (e_j-R^{(\alpha_1,\ldots,\alpha_{n-1})}_j(e^{i-1})),e_{i+1},e_n \right) = R^{(\alpha_1,\ldots,\alpha_{n-1})}_k(e^{i-1}),
$$
as desired. 
\end{proof}

\subsection*{Proof of Theorem \ref{thm_generalized_geom}}

\begin{proof}
We prove first the straightforward implications of the statement. Let $\alpha=(\alpha_1,\ldots,\alpha_{n-1},\alpha_n) \in [0,1]^{n-1} \times \{1\}$ and $R^\alpha$ be the corresponding multi-parameter geometric rule. 
\begin{itemize}
    \item[(a)] \emph{Partial-implementation invariance}. Let $e \in \mathcal{D}$ and $i\in N$. We define $\overline{e}^i \in \mathcal{D}$ as
    $$
    \overline{e}^i = \left( 0_{U(i) \backslash \{i\}} , e_i+ \sum_{k=1}^{i-1} (e_k-R^\alpha_k(e)), e_{D(i) \backslash \{i\}} \right).
    $$
    We must check that $R^\alpha_j(\overline{e}^i)=R^\alpha_j(e)$ for each $j \in \{i,\ldots,n-1\}$. If $j=i$, by definition of multi-parameter geometric rules, we have that
    $$
    R^\alpha_i(\overline{e}^i)= \alpha_i \left( \overline{e}^i_i+ \sum_{k=1}^{i-1} (\overline{e}^i_k-R^\alpha_k(\overline{e}^i)) \right) = \alpha_i \overline{e}^i_i = \alpha_i \left( e_i + \sum_{k=1}^{i-1} (e_k-R^\alpha_k(e)) \right) = R^\alpha_i(e).
    $$
    If $j=i+1$, 
    \begin{align*}
        R^\alpha_{i+1}(\overline{e}^i) &= \alpha_{i+1} \left( \overline{e}^i_{i+1}+ \sum_{k=1}^{i} (\overline{e}^i_k-R^\alpha_k(\overline{e}^i)) \right) \\
        &= \alpha_{i+1} \left( \overline{e}^i_{i+1} + (\overline{e}^i_i-R^\alpha_i(\overline{e}^i))  \right)  \\
        &= \alpha_{i+1} \left( e_{i+1} + e_i+ \sum_{k=1}^{i-1} (e_k-R^\alpha_k(e)) -R^\alpha_i(e)  \right) \\
        &= \alpha_{i+1} \left( e_{i+1} + \sum_{k=1}^{i} (e_k-R^\alpha_k(e)) \right)  \\
        &= R^\alpha_{i+1}(e).
    \end{align*}
    Assume, by induction, that it holds for agent $j$, we now prove that it is also true for agent $j+1 \in \{i,\ldots,n-1\}$.  
    \begin{align*}
        R^\alpha_{j+1}(\overline{e}^i) &= \alpha_{j+1} \left( \overline{e}^i_{j+1}+ \sum_{k=1}^{j} (\overline{e}^i_k-R^\alpha_k(\overline{e}^i)) \right) \\
        &= \alpha_{j+1} \left( \overline{e}^i_{j+1} + (\overline{e}^i_i-R^\alpha_i(\overline{e}^i)) + \sum_{k=i+1}^{j} (\overline{e}^i_k-R^\alpha_k(\overline{e}^i)) \right)  \\
        &= \alpha_{j+1} \left( e_{j+1} + e_i+ \sum_{k=1}^{i-1} (e_k-R^\alpha_k(e)) -R^\alpha_i(e) + \sum_{k=i+1}^{j} (e_k-R^\alpha_k(e)) \right) \\
        &= \alpha_{j+1} \left( e_{j+1} + \sum_{k=1}^{j} (e_k-R^\alpha_k(e)) \right)  \\
        &= R^\alpha_{j+1}(e).
    \end{align*}
    Notice that, by definition of rule, we also have that $R^\alpha_n(\overline{e}^i)=R^\alpha_n(e)$. Therefore, we conclude that $R^\alpha_j(\overline{e}^i)=R^\alpha_j(e)$, for each $j \in \{i,\ldots,n\}$, and the property is satisfied.
    \item[(b)] \emph{Upstream invariance}. Let $e,e' \in \mathcal{D}$, such that $e_i<e'_i$ for some $i \in N$, and $e_j = e'_j$ for all $j \in N \backslash \{i\}$, for each $k<i$ we have that
    $$
    R^\alpha_k(e') = \alpha_k \left( e'_k + \sum_{j=1}^{k-1} \prod_{l=j}^{k-1} (1-\alpha_l)e'_j \right) = \alpha_k \left( e_k + \sum_{j=1}^{k-1} \prod_{l=j}^{k-1} (1-\alpha_l)e_j \right) = R^\alpha_k(e),
    $$
    as desired.
    \item[(c)] \emph{Scale invariance}. Let $\rho \in \mathbb{R}_+$. For each $e \in \mathcal{D}$ and each $i \in \{1,\ldots,n-1\}$,
    $$
    R^\alpha_i(\rho e) = \alpha_i \left( \rho e_i + \sum_{k=1}^{i-1} \prod_{j=k}^{i-1} (1-\alpha_j) \rho e_k \right) =  \rho \left[ \alpha_i \left( e_i + \sum_{k=1}^{i-1} \prod_{j=k}^{i-1} (1-\alpha_j)e_k \right) \right] = \rho R^\alpha_i(e),
    $$
    and
    $$
    R^\alpha_n(\rho e) = \rho e_n + \sum_{k=1}^{n-1} \prod_{j=k}^{n-1} (1-\alpha_j) \rho e_k = \rho \left[ e_n + \sum_{k=1}^{n-1} \prod_{j=k}^{n-1} (1-\alpha_j)e_k \right] = \rho  R^\alpha_n(e),
    $$
    as desired.
\end{itemize} 
We now focus on the converse implication. Let $e \in \mathcal{D}$ and $R$ be a rule satisfying \textit{partial-implementation invariance}, \textit{upstream invariance} and \textit{scale invariance}. By Lemma \ref{UI+PII}, there exist $n-1$ functions $\alpha_1,\ldots,\alpha_{n-1}: \mathbb{R}_+ \longrightarrow \mathbb{R}_+$ such that $\alpha_i(r) \le r$ for each $i \in \{1,\ldots,n-1\}$, and
$$
R_i(e)=R^{(\alpha_1,\ldots,\alpha_{n-1})}_i(e) = 
\begin{cases}
    \alpha_1(e_1) & \text{if } i=1 \\[0.1cm]
    \alpha_i\left( e_i + \sum_{k=1}^{i-1} \left( e_k-R^{(\alpha_1,\ldots,\alpha_{n-1})}_k(e) \right) \right)  & \text{if } i \in \{2,\ldots,n-1\} \\[0.1cm]
    e_n + \sum_{k=1}^{n-1} \left( e_k-R^{(\alpha_1,\ldots,\alpha_{n-1})}_k(e) \right)  & \text{if } i=n. 
\end{cases}
$$
Furthermore, 
$$
\alpha_i(r) = R_i(\overbrace{0, \ldots, 0}^{i-1}, r, \overbrace{0, \ldots, 0}^{n-i}).
$$
Now, by \textit{scale invariance}, 
$$
R_i(\overbrace{0, \ldots, 0}^{i-1}, r, \overbrace{0, \ldots, 0}^{n-i})=r R_i(\overbrace{0, \ldots, 0}^{i-1}, 1, \overbrace{0, \ldots, 0}^{n-i}).
$$
Thus, $\alpha_i(r)= r \alpha_i(1)$. Let $\alpha_i=\alpha_i(1)\in[0,1]$. Then  
it follows that, for each $i \in \{1,\ldots,n-1\}$, there exists $\alpha_i\in[0,1]$ such that $\alpha_i(r)=\alpha_i r$. Altogether, we have that $R$ is a multi-parameter geometric rule.    
\end{proof}

\subsection*{Proof of Theorem \ref{thm_geom}}

\begin{proof}
By Theorem \ref{thm_generalized_geom}, it follows that each single-parameter geometric satisfies \textit{upstream invariance}, \textit{partial-implementation invariance} and \textit{scale invariance}. It is straightforward to show that they also satisfy \textit{equal sources}. Conversely, let $R$ be a rule satisfying the four axioms and let $e \in \mathcal{D}$. 
By Theorem \ref{thm_generalized_geom}, there exists $(\alpha_1,\ldots,\alpha_{n-1})\in [0,1]^{n-1}$ such that 
$$
R_i(e)=R^{(\alpha_1,\ldots,\alpha_{n-1})}_i(e) = 
\begin{cases}
    \alpha_1e_1 & \text{if } i=1 \\[0.1cm]
    \alpha_i\left( e_i + \sum_{k=1}^{i-1} \left( e_k-R^{(\alpha_1,\ldots,\alpha_{n-1})}_k(e) \right) \right)  & \text{if } i \in \{2,\ldots,n-1\} \\[0.1cm]
    e_n + \sum_{k=1}^{n-1} \left( e_k-R^{(\alpha_1,\ldots,\alpha_{n-1})}_k(e) \right)  & \text{if } i=n. 
\end{cases}
$$
Furthermore, 
$$
\alpha_i = R_i(\overbrace{0, \ldots, 0}^{i-1}, 1, \overbrace{0, \ldots, 0}^{n-i}).
$$
Now, by \textit{equal sources}, $\alpha_i=\alpha_j$ for each pair $i,j\in \{2,\ldots,n-1\}$. Altogether, we have that $R$ is a single-parameter geometric.    
\end{proof}

\subsection*{Proof of Theorem \ref{thm_shapley}}
\begin{proof}
Note first that the serial rule coincides with the multi-parameter geometric rule when $\alpha_i = \frac{1}{n-i+1}$ for each $i \in \{1,\ldots,n-1\}$. 
Thus, it satisfies partial-implementation invariance and upstream invariance (Theorem \ref{thm_generalized_geom}). 
As for neutrality, let $e \in \mathcal{D}$, such that $e_i>0$ for some $i \in \{1,\dots, n-1\}$, and $e_j=0$ for all $j \in N \backslash \{i\}$. Then,
$$
R^{S}_i(e)=\alpha_i e_i = \frac{1}{n-i+1} e_i= \frac{1}{n-i} \left[ e_i - \frac{1}{n-i+1}e_i \right]= \frac{1}{n-i} \sum_{k=i+1}^{n} R^{S}_k(e),
$$  
as desired. 

Conversely, let $R$ be a rule that satisfies the axioms in the statement. By Lemma \ref{UI+PII}, we know that, for each $e \in \mathcal{D}$,
$$
R_i(e) = 
\begin{cases}
    \alpha_1(e_1) & \text{if } i=1 \\[0.1cm]
    \alpha_i\left( e_i + \sum_{k=1}^{i-1} \left( e_k-R_k(e) \right) \right)  & \text{if } i \in \{2,\ldots,n-1\} \\[0.1cm]
    e_n + \sum_{k=1}^{n-1} \left( e_k-R_k(e) \right)  & \text{if } i=n, 
\end{cases},
$$
where, for each $i \in \{1,\ldots,n-1\}$, $\alpha_i(e_i)= R_i(\overbrace{0, \ldots, 0}^{i-1}, e_i, \overbrace{0, \ldots, 0}^{n-i})$. 

Now, by \emph{neutrality},
$$
\alpha_i(e_i)=\frac{1}{n-i}(e_i-\alpha_i(e_i)).
$$
Equivalently, $\alpha_i(e_i)=\frac{e_i}{n-i+1}$. Let $\alpha_i=\frac{1}{n-i+1}$. Then, $\alpha_i(e_i)=\alpha_ie_i$, and $\alpha_i \in [0,1]$ because $0 \le \alpha_i(e_i) \le e_i$. That is, $R$ is actually a multi-parameter geometric rule in which $\alpha_i=\frac{1}{n-i+1}$ for each $i \in \{1,\ldots,n-1\}$, and $\alpha_n=1$, which concludes the proof. 
\end{proof}

\subsection*{Proof of Theorem \ref{thm_add_geom}}
\begin{proof}
We start by showing that each member of the $\beta$-family satisfies the axioms in the statement. By definition, each $\beta$-rule is a multi-parameter geometric rule. Thus, it saltisfies scale invariance, upstream invariance and partial-implementation invariance. Besides, each $\beta$-rule is also a $R^\delta$ rule. Indeed, for each $\beta_k \in [0,1)$, let us define $\delta(\beta_k)=\delta^k=(\delta^k_1, \ldots,\delta^k_n) \in [0,1]^{n-1} \times \{1\}$ as follows:
$$
\delta^k_i = 
\begin{cases}
1 & \text{if } i<k \\
\beta_k & \text{if } i=k \\
\frac{1}{n-i+1} & \text{if } k<i<n.
\end{cases}
$$
Then, let $e \in \mathcal{D}$ and $i \in N$. We consider three cases.
\begin{itemize}
\item[(i)] $i<k$. By definition of $\delta^k$,
$$
R^{\delta_k}_i(e) = \delta^k_i e_i + \sum_{j<i} \frac{1-\delta^k_j}{n-j} e_j = e_i.
$$
By definition of $R^{\beta_k}$, as $i<k$, $\alpha_r=1$ for all $r \in \{1,\ldots,i\}$, and then
$$
R^{\beta_k}_i(e) = \alpha_i \left[ e_i + \sum_{j<i} \prod_{j \le r <i} (1-\alpha_r) e_j \right] = e_i.
$$
Therefore, when $i<k$, $R^{\delta_k}_i(e) = R^{\beta_k}_i(e)$.
\item[(ii)] $i=k$. We compute both $R^{\delta_k}_k(e)$ and $R^{\beta_k}_k(e)$ to check that they coincide:
$$
R^{\delta_k}_k(e) = \delta^k_k e_k + \sum_{j<k} \frac{1-\delta^k_j}{n-j} e_j = \delta^k_k e_k = \beta_k e_k,
$$
and
$$
R^{\beta_k}_k(e) = \alpha_k \left[ e_k + \sum_{j<k} \prod_{j \le r <k} (1-\alpha_r) e_j \right] = \beta_k e_k.
$$
\item[(iii)] $k<i \le n$. Let us start by noting that
\begin{equation}\label{eq_prod}
\prod_{j \le r <i} (1-\alpha_r) = \prod_{j \le r <i} \frac{n-r}{n-r+1} = \frac{n-i+1}{n-j+1}.
\end{equation}
By definition of $R^{\beta_k}$, as $\alpha_r=1$ for all $r <k$, we have that
\begin{align*}
R^{\beta_k}_i(e) &= \alpha_i \left[ e_i + \sum_{j<k} \prod_{j \le r <i} (1-\alpha_r) e_j + \prod_{k \le r <i} (1-\alpha_r) e_k + \sum_{k<j<i} \prod_{j \le r <i} (1-\alpha_r) e_j \right] \\
&= \alpha_i \left[ e_i + \prod_{k \le r <i} (1-\alpha_r) e_ k+ \sum_{k<j<i} \prod_{j \le r <i} (1-\alpha_r) e_j \right].
\end{align*}
Using (\ref{eq_prod}), and the definition of each $\alpha_i$ in the $\beta_k$-subfamily, the previous expression is simplified to:
\begin{align*}
R^{\beta_k}_i(e) &= \frac{1}{n-i+1} e_i + \frac{1}{n-i+1} \frac{n-i+1}{n-(k+1)+1} (1-\beta_k) e_k + \frac{1}{n-i+1} \sum_{k<j<i} \frac{n-i+1}{n-j+1} e_j \\
&= \frac{1}{n-i+1} e_i + \frac{(1-\beta_k)}{n-k} e_k + \sum_{k<j<i} \frac{1}{n-j+1} e_j.
\end{align*}
On the other hand, by definition of $\delta^k$,
\begin{align*}
R^{\delta^k}_i(e) &= \delta^k_i e_i + \sum_{j<k} \frac{(1-\delta^k_j)}{n-j} e_j + \frac{(1-\delta^k_k)}{n-k} e_k + \sum_{k<j<i} \frac{(1-\delta^k_j)}{n-j} e_j\\
&= \frac{1}{n-i+1} e_i + \frac{(1-\beta_k)}{n-k} e_k + \sum_{k<j<i}  \frac{1}{n-j+1} e_j.\\
\end{align*}
Therefore, when $k<i \le n$, $R^{\delta_k}_i(e) = R^{\beta_k}_i(e)$.
\end{itemize}
Altogether, 
$$
R^{\delta^k}_i(e) = \delta^k_i e_i + \sum_{j<i} \frac{(1-\delta^k_j)e_j}{n-j} =
\begin{cases}
1 & \text{if } i<k \\
\beta_k & \text{if } i=k \\
\frac{1}{n-i+1} & \text{if } k<i<n,
\end{cases}
$$
and
$$
R^{\beta_k}_i(e) = \alpha_i \left( e_i + \sum_{k=1}^{i-1} \prod_{j=k}^{i-1} (1-\alpha_j)e_k \right).
$$
Thus, we have shown that each $\beta$-rule belongs to the $R^\delta$ family, which guarantees (by Proposition \ref{thm_add}) that each $\beta$-rule also satisfies downstream impartiality. 

Now, we focus on the converse implication. That is, let $R$ be a rule satisfying the four axioms in the statement. By Theorem \ref{thm_generalized_geom} and Proposition \ref{thm_add}, $R$ belongs to the intersection of families $R^\alpha$ and $R^\delta$. Let $e^i \in \mathcal{D}$ be such that $e^i_i=1$ and $e^i_j=0$ for all $j \in N \backslash \{i\}$. Then, $R^\alpha(e^i) = R^\delta(e^i)$ for all $i \in \{1,\ldots,n-1\}$. Thus, by definition,
$$
\alpha_1=R^\alpha_1(e^1) = R^\delta_1 (e^1) = \delta_1.
$$
We distinguish two cases.
\begin{itemize}
\item[(1.a)] $\alpha_1=\delta_1\neq 1$. As $R^\alpha_2(e^1) = R^\delta_2(e^1)$, we have that
$$
\alpha_2 (1-\alpha_1) = \frac{1-\delta_1}{n-1} \; \equiv \; \alpha_2=\frac{1}{n-1}.
$$
Analogously, $R^\alpha_3(e^1) = R^\delta_3(e^1)$ implies that
$$
\alpha_3 (1-\alpha_2) (1-\alpha_1) = \frac{1-\delta_1}{n-1} \; \equiv \; \alpha_3=\frac{1}{n-2}.
$$
Now, we proceed by induction. Assuming that $\alpha_{j}=\frac{1}{n-j+1}$ for each $1<j<k$, we prove that $\alpha_k=\frac{1}{n-k+1}$. Indeed, as $R^\alpha_k(e^1) = R^\delta_k(e^1)$, we have that
$$
 \alpha_k(1-\alpha_{k-1})\cdots (1-\alpha_1) = \frac{1-\delta_1}{n-1} \; \equiv \; \alpha_k = \frac{1}{n-k+1}.
$$
Therefore, in this case, $\alpha_1 \in [0,1)$ and $\alpha_i=\frac{1}{n-i+1}$ for each $i \in \{2,\ldots,n-1\}$.
\item[(1.b)] $\alpha_1=\delta_1 = 1$. Let us consider the problem $e^2 \in \mathcal{D}$. As $R^\alpha_2(e^2) = R^\delta_2(e^2)$, then $\alpha_2=\delta_2$. We distinguish two cases:
\begin{itemize}
\item[(2.a)] $\alpha_2=\delta_2 \neq 1$. As $R^\alpha_3(e^2) = R^\delta_3(e^2)$, we have that
$$
\alpha_3 (1-\alpha_2) = \frac{1-\delta_2}{n-2} \; \equiv \; \alpha_3=\frac{1}{n-2}.
$$
As in case (1.a), we proceed by induction. Assuming that $\alpha_{j}=\frac{1}{n-j+1}$ for each $2<j<k$, we prove that $\alpha_k=\frac{1}{n-k+1}$. Indeed, as $R^\alpha_k(e^2) = R^\delta_k(e^2)$, we have that
$$
 \alpha_k(1-\alpha_{k-1})\cdots (1-\alpha_2) = \frac{1-\delta_2}{n-1} \; \equiv \; \alpha_k = \frac{1}{n-k+1}.
$$
Therefore, in this case, $\alpha_1=1$, $\alpha_2 \in [0,1)$ and $\alpha_i=\frac{1}{n-i+1}$ for each $i \in \{3,\ldots,n-1\}$.
\item[(2.b)] $\alpha_2=\delta_2 = 1$. Let us consider the problem $e^3 \in \mathcal{D}$. As $R^\alpha_3(e^3) = R^\delta_3(e^3)$, then $\alpha_3=\delta_3$. Again, we have two cases:
\begin{itemize}
\item[(3.a)] $\alpha_3=\delta_3 \neq 1$. Arguing as in case (2.a) we conclude that $\alpha_1=\alpha_2=1$, $\alpha_3 \in [0,1)$ and $\alpha_i=\frac{1}{n-i+1}$ for each $i \in \{4,\ldots,n-1\}$.
\item[(3.b)] $\alpha_3=\delta_3 = 1$... 
\end{itemize}
\end{itemize}
\end{itemize}
We can proceed iteratively to conclude the proof.
\end{proof}

\newpage


\end{document}